\begin{document}

\title{Scheduling Packets with Values and Deadlines in Size-bounded Buffers\thanks{Research is partially supported by NSF grant CCF-0915681.}}

\author{Fei Li\thanks{Department of Computer Science, George Mason University, Fairfax, VA 22030, USA. {\tt lifei@cs.gmu.edu}}}

\maketitle


\begin{abstract}
Motivated by providing quality-of-service differentiated services in the Internet, we consider buffer management algorithms for network switches. We study a {\em multi-buffer model}. A network switch consists of multiple size-bounded buffers such that at any time, the number of packets residing in each individual buffer cannot exceed its capacity. Packets arrive at the network switch over time; they have values, deadlines, and designated buffers. In each time step, at most one pending packet is allowed to be sent and this packet can be from any buffer. The objective is to maximize the total value of the packets sent by their respective deadlines. A $9.82$-competitive online algorithm has been provided for this model (Azar and Levy. SWAT 2006), but no offline algorithms have been known yet. In this paper, We study the offline setting of the multi-buffer model. Our contributions include a few optimal offline algorithms for some variants of the model. Each variant has its unique and interesting algorithmic feature. These offline algorithms help us understand the model better in designing online algorithms.
\end{abstract}

\newpage


\section{Introduction}

Motivated by providing quality-of-service differentiated services in the Internet, we consider buffer management algorithms for network switches. We study a {\em multi-buffer model}. A network switch consists of $m$ size-bounded buffers $Q_1$, $Q_2$, $\ldots$, $Q_m$; their sizes are denoted as $B_1, \ B_2, \ \ldots, \ B_m$ respectively. At any time, the number of packets residing in each individual buffer $Q_i$ cannot exceed its capacity $B_i$. Time is discretized into time steps. Packets arrive at the network switch over time and each packet $p$ has an integer arriving time (release time) $r_p \in \mathbb R^+$, a non-negative value $v_p \in \mathbb R^+$, an integer deadline $d_p \in \mathbb Z^+$, and a designated buffer $b_p \in \{Q_1, \ \ldots, \ Q_m\}$ that it can reside in. The deadline $d_p$ specifies the time by which $p$ should be sent. This model is preemptive such that the packets already existing in the buffers can be dropped at any time before they are transmitted. A dropped packet cannot be delivered any more. In each time step, at most one pending packet is allowed to be sent and this packet may be from any buffer. The objective is to maximize {\em weighted throughput}, defined as the total value of the packets transmitted by their respective deadlines.

The first QoS buffer management model is introduced in~\cite{AMRR05}. Since then, quite a few researchers have studied this model as well as other variants, mostly in the online settings~\cite{KLMPSS04, H01, CJST07, LSS05, EW07}. A well-studied model is called the {\em bounded-delay model}. In this model, there is only one buffer. Packets have integer release time, integer deadlines, and non-negative values. The objective is to maximize the total value of the packets sent by their deadlines. An implicit assumption on this model is the buffer's sufficiently large size. All released packets can be stored in the buffer before they are delivered or they get to expire. For the bounded-delay model, an optimal offline algorithm running in $O(n \log n)$ time has been proposed in~\cite{KLMPSS04}, where $n$ is the number of packets released. We call the bounded-delay model a {\em bounded-buffer model} in case the buffer size is enforced to be finite. The bounded-buffer model generalizes the bounded-delay model, if we allow the buffer size to be larger than any packet's {\em slack time}. (A packet's slack time is defined as the difference between its deadline and release time.) The bounded-buffer model is one variant of the {\em multi-buffer model} proposed by Azar and Levy~\cite{AL06}. A $9.82$-competitive online algorithm has been provided for this model~\cite{AL06}, but no offline algorithms have been known yet. In this paper, we study the offline setting of the multi-buffer model. Our contributions include a few optimal offline algorithms for some variants of the model. Each variant has its unique and interesting algorithmic feature. These offline algorithms help us understand the model better when we are designing online algorithms.

The variants and their algorithms' running complexities are summarized in Table~\ref{tbl:summary}. In the {\em uniform-value} setting, all packets have the same value. In the {\em non-uniform-value} setting, packets are allowed to have arbitrary values. (In designing offline algorithms, there is no difference between preemptive and non-preemptive settings.)

\begin{table}[h]
\begin{tabular}{|c|c|c|}
\hline \hline
& uniform-value setting & non-uniform-value setting \\ \hline
$m = 1$ & $\Omega(n \log \min\{B, \ n\})$ & $O(n^2)$ \\ \hline
$m > 1$ (packets sharing a common deadline) & $O(n \log n)$ & $O(n^2 \log n)$ \\ \hline \hline
\end{tabular}
\label{tbl:summary}
\caption{Summary of the running complexities of the optimal offline algorithms for some variants of the multi-buffer model. $n$ is the number of packets in the input sequence. For the bounded-buffer model, the buffer size is $B \in \mathbb Z^+$.}
\end{table}


\section{The Bounded-buffer Model, $m = 1$}

Let OPT denote an optimal offline algorithm. Without loss of generality, we assume OPT is {\em non-idling}, that is, OPT sends a packet as long as the buffer is non-empty.


\subsection{The uniform-value setting.}

In the uniform-value setting, all packets have the same `weight' and the objective is to maximize the number of packets delivered successfully. An optimal offline algorithm called DOS (which stands for `Deadline-Order-Sorting/Sending') works simply as follows.

\begin{algorithm}
All packets in the buffer are organized by their deadlines using an augmented red-black tree~\cite{CLRS01}. Upon each new arrival, we insert it into the packet queue in increasing order of deadlines. Let the current time be $t$. If the buffer is full or if more than $t' - t$ packets are to be sent by some deadline $t'$ (we call these cases `{\em tight}'), we drop the packet with the earliest deadline. In each time step, the earliest-deadline packet in the buffer is sent.
\end{algorithm}

\begin{Lemma}
For the bounded-buffer model in the uniform-value setting, there exists an optimal offline algorithm running in $O(n \log \min\{B, \ n\})$ time, where $n$ is the number of packets released and $B$ is the buffer size.
\label{lemma:singleoff1}
\end{Lemma}

\begin{proof}
We first prove DOS's correctness using a loop invariant. The loop invariant is: At any time, there exists a one-to-one mapping (injection) from each packet $q$ in OPT's buffer to a packet $j$ in DOS's buffer such that $d_q \le d_j$. Without loss of generality, we align the mappings such that an earlier-deadline packet in OPT's buffer maps to an earlier-deadline packet in DOS's buffer. For example, assume $q_1$ and $q_2$ in OPT's buffer map to $j_1$ and $j_2$ in DOS's buffer respectively. If $d_{q_1} < d_{q_2}$ but $d_{j_1} \ge d_{j_2}$, we swap the mappings and let $q_1$ map to $j_2$ and $q_2$ map to $j_1$. Note $d_{q_1} \le d_{q_2} \le d_{j_2} \le d_{j_1}$ and $d_{q_2} \le d_{j_2} \le d_{j_1}$.

This invariant holds before any packet is released. Let us assume it holds at time $t$. Consider a new arrival $p$ accepted by OPT. $p$ is either accepted by DOS or there exists a packet $j$ which is not mapped yet by any packet in OPT's buffer has a deadline $d_j \ge d_p$. (In this case, we can map $p$ in OPT's buffer to $j$ in DOS's buffer.) Otherwise, we can drop $j$ and accept $p$ or OPT's buffer is `tight' as well and OPT rejects $p$. In each time step, both OPT and DOS send one packet as long as their buffers are non-empty. Without loss of generality, we can assume OPT sends the earliest-deadline packet in its buffer. Thus, the loop invariant still holds after each step's deliveries. The loop invariant implies the correctness of the algorithm.

For each new arrival, it takes $O(\log \min\{B, \ n\})$ to insert $p$ into or drop $p$ out of the packet queue in DOS's buffer. The algorithm has an upper bound of running time $O(n \log \min\{B, \ n\})$. The proof is completed. $\Box$
\end{proof}

The following instance shows that no algorithm has a running complexity asymptotically better than $\Omega(n \log \min\{B, \ n\})$.

\begin{Example}
Assume $B \ge n$. All packets are released at the same time $0$. To identify whether all packets can be delivered successfully, we have to sort them by deadlines such that packets can be delivered in an earliest-deadline-first (EDF) manner. The lower bound of sorting $n$ numbers takes $\Omega(n \log n)$~\cite{CLRS01}.
\end{Example}

\begin{corollary}
Consider the bounded-buffer model in the uniform-value setting. If packets' deadlines are weakly increasing along with their release time, EDF is an optimal algorithms running in linear time. Specifically, EDF runs in an online manner.
\end{corollary}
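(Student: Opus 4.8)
The plan is to derive the corollary directly from the correctness of DOS (Lemma~\ref{lemma:singleoff1}), by observing that when deadlines are weakly increasing in release time (``agreeable deadlines''), every data-structure operation performed by DOS collapses to a constant-time operation on a plain first-in-first-out queue. This removes the $O(\log\min\{B,n\})$ factor and leaves an algorithm that is exactly EDF run online.

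First I would record the structural fact that drives everything: if $r_p\le r_{p'}$ implies $d_p\le d_{p'}$, then at the moment a packet $p$ is released it has the largest deadline among all packets currently pending. Hence the deadline-sorted queue maintained by DOS is always in FIFO order; ``insert in increasing order of deadlines'' becomes ``append at the tail'', and ``drop the earliest-deadline packet'' becomes ``remove the head'', each taking $O(1)$ time with a simple queue plus a size counter (no red-black tree needed).

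Next I would show the ``tight'' test is also $O(1)$. Maintain the invariant that the current queue $q_1,\dots,q_k$ (deadlines weakly increasing, current time $t$) is EDF-feasible, i.e. $d_{q_i}\ge t+i-1$ for every $i$. Sending the head each step preserves this invariant, since shifting each surviving packet one position earlier exactly cancels the advance of the clock; so feasibility is threatened only upon a new arrival, and since the arriving packet $p$ goes to the tail, the only new constraint is $d_p\ge t+k$. Therefore the buffer is ``tight'' precisely when $k=B$ or $d_p<t+k$, both checkable in $O(1)$, and in that case removing the single head packet $q_1$ restores the size bound and (using $d_p\ge d_{q_k}\ge t+k-1$) restores EDF-feasibility. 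Thus DOS, specialized to agreeable deadlines, is: keep a FIFO queue with a size counter, do an $O(1)$ admission/preemption test on each arrival, and send the head on each step — which is plain EDF with greedy admission and is manifestly online.

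Finally, correctness is inherited verbatim from Lemma~\ref{lemma:singleoff1}, and the running time is $O(n)$: there are $n$ arrivals each handled in $O(1)$, at most $n$ transmissions, and idle time steps can be skipped. The only step needing care is the reduction of the ``tight'' predicate to the single inequality $d_p<t+k$, i.e. verifying that the EDF-feasibility invariant is genuinely preserved across both sends and admissions; the rest is bookkeeping.
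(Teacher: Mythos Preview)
The paper states this corollary without proof, so there is no explicit paper argument to compare against; your approach of specializing DOS from Lemma~\ref{lemma:singleoff1} and showing that under agreeable deadlines every red-black-tree operation collapses to an $O(1)$ FIFO operation is exactly the derivation the placement of the statement as a corollary invites, and it is correct. In particular, your reduction of the ``tight'' predicate to the single tail check $d_p<t+k$ via the maintained EDF-feasibility invariant is sound, and the chain $d_p\ge d_{q_k}\ge t+k-1$ correctly certifies that at most one head eviction per arrival suffices.
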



\subsection{The non-uniform-value setting.}

If $B \ge n$, the optimal offline algorithm~\cite{KLMPSS04} for the bounded-delay model applies on the bounded-buffer model and has a running time of $O(n \log n)$. We assume $B < n$. Fix an input sequence $\mathcal I$. We have the following algorithm.

\begin{algorithm}
We sort all packets in $\mathcal I$ in non-increasing value order, with ties broken in favor of the one with a later deadline. We start from a set of packets $S = \emptyset$. For each packet $j \in ({\mathcal I} \setminus S)$, we pick up $j$ in order and run EDF to examine whether all packets in $S \cup \{j\}$ can be delivered successfully by their respective deadlines. (Actually, we can start from the time $r_j$ to run EDF over the packets $S \cup \{j\}$ instead of from scratch; though this does not help to reduce the asymptotic running complexity.) If ``yes'', we update $S$ with $S \cup \{j\}$. For each examined packet $j$, no matter whether we insert $j$ into $S$ or not, we drop it out of $\mathcal I$. We examine all packets in $\mathcal I$ in order till $\mathcal I$ gets empty.
\label{alg:singlenon}
\end{algorithm}

\begin{Lemma}
For the bounded-buffer model in the non-uniform-value setting, there exists an optimal offline algorithm running in $O(n^2)$ time, where $n$ is the number of packets released.
\label{lemma:singleoff2}
\end{Lemma}

\begin{proof}
We claim that the schedule of $S$ we finally have from Algorithm~\ref{alg:singlenon} is optimal, based on the matroid property of this model. Consider a set of packets that can be delivered successfully by their deadlines in an EDF manner. Then, its any subset can be delivered successfully as well and the heredity property is satisfied. Also, in each time step, only one packet is allowed to send, and thus, the exchange property holds.

Let $|{\mathcal I}| = n$. Sorting packets in $\mathcal I$ takes $O(n \log n)$ time. The buffer has at most $B$ packets at any time, thus, each packet insertion (in increasing deadline order) takes $O(\log B)$ time. Running EDF over a set of packets $S \cup \{j\}$ takes time $|S| + 1 \le n$. For each packet $j$, examining $S \cup \{j\}$ of being successfully sent takes time $O(\log B + n)$. Thus, the total running time of the algorithm is $O(n \log n + n (n + \log B)) = O(n^2 + n \log B)$. Thus, our algorithm has a running time of $O(n^2)$. The proof is completed. $\Box$
\end{proof}


\section{Scheduling Packets with a Common Deadline or Without Deadlines, $m > 1$}

Let OPT denote an optimal offline algorithm. Without loss of generality, we assume OPT is non-idling. In scheduling packets without deadlines, we assume all packets have a common deadline $r_{\max} + n$, where $r_{\max}$ is the largest release time. We also note that when there are no new arrivals, all packets already in the buffers can be sequentially delivered.

Let $P_i(t)$ denote the set of packets released at time $t$ targeting the buffer $Q_i$. Since each buffer $Q_i$ cannot accommodate more than $B_i$ packets at any time, we assume that for each $Q_i$, at any release time $t$, $|P_i(t)| \le B_i$. Let $Q_i(t)$ and $|Q_i(t)|$ denote the packet queue in the buffer $Q_i$ and its size, respectively.  Let $r^i_{\max}$ denote the largest release time of a packet targeting the buffer $Q_i$. Let $D$ be the common deadline.


\subsection{The uniform-value setting.}

In the uniform-value setting, all packets have the same `weight' and the objective is to maximize the number of packets delivered successfully. Instead of directly targeting maximizing the total number of packets delivered, we tackle this variant from the perspective of minimizing the number of packets dropped. For each buffer, our idea is to calculate the number of buffer slots that we have to reserve in order to accept future arrivals (that is, minimizing the number of packets dropped due to `packet overflow'). This value indicates us the latest time that we have to deliver a packet from a buffer.

\begin{algorithm}
For each buffer $Q_i$, consider $P_i(t)$ in decreasing order of release time $t$. Define a variable $Z_i(t)$ to denote the number of buffer slots that are needed from the buffer $Q_i$ to accommodate packets released at/after time $t$. Set $Z_i(r^i_{\max}) = \max\{|P_i(r^i_{\max})|, \ D - r^i_{\max}\}$. In reverse order of release time, we calculate $Z_i(t) = \min\{B_i, \ Z_i(t') + |P_i(t)| - (t' - t)\}$, where $t'$ is the immediate next release time (of packets) after time $t$ for $Q_i$.

For each new arrival, if its designated buffer is full, drop the packet. Otherwise, append the packet to the queue. In each time step $t$, send any packet from the buffer $Q_i$ if $Z_i({\tilde t}) + |Q_i(t)| \ge B_i$, where ${\tilde t}$ is the immediate next release time of packets for the buffer $Q_i$. Ties are broken arbitrarily. If all buffers $Q_i$ have $Z_i({\tilde t}) + |Q_i(t)| < B_i$, choose any packet to send. We switch to another buffer to send a packet only if this buffer is empty or if another buffer $Q_i$ satisfies $Z_i({\tilde t}) + |Q_i(t)| \ge B_i$ at time $t$.
\label{alg:multiuniform}
\end{algorithm}

\begin{theorem}
In scheduling packets with the same value and same deadline, there exists an optimal offline algorithm running in $O(n \log n)$ time, where $n$ is the number of packets released.
\label{theorem:uniformoff}
\end{theorem}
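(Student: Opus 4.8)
The plan is to prove three things about Algorithm~\ref{alg:multiuniform}: that it always outputs a feasible schedule, that the schedule delivers the maximum possible number of packets, and that it runs in $O(n\log n)$ time. Feasibility is immediate: a packet is appended to a buffer only when that buffer has a free slot, so no buffer ever exceeds its capacity $B_i$; and since every packet shares the deadline $D$, any packet still present at a step $\le D$ is sent on time. So the content is optimality together with the time bound.

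For optimality I would first pin down the combinatorial meaning of the auxiliary quantities and record it as a lemma. The claim, proved by backward induction on the $Q_i$-release times starting from $Z_i(r^i_{\max})=\max\{|P_i(r^i_{\max})|,\,D-r^i_{\max}\}$ and using the recurrence $Z_i(t)=\min\{B_i,\,Z_i(t')+|P_i(t)|-(t'-t)\}$, is that $Z_i(t)$ is the least number of slots of $Q_i$ that are unavoidably committed from time $t$ onward: the smallest queue size with which any schedule can enter time $t$ (just before the arrivals $P_i(t)$) and still serve every later $Q_i$-packet without an overflow drop, given that at most one slot of $Q_i$ drains per step and the queue is capped at $B_i$. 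The consequence I actually need is that the algorithm's test ``$Z_i(\tilde t)+|Q_i(t)|\ge B_i$'' holds precisely when declining to transmit from $Q_i$ at step $t$ forces a later overflow drop inside $Q_i$; this is the fact that makes the greedy service rule correct, including the role of the $D-r^i_{\max}$ term (which keeps a buffer's pressure high as the common deadline approaches and, in the deadline-free reduction where $D=r_{\max}+n$, makes the algorithm simply flush each buffer after its last release).

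The optimality proof itself follows the loop-invariant template of Lemma~\ref{lemma:singleoff1}. Fix an optimal non-idling OPT. The invariant asserts that after every arrival event and every transmission step there is an injection from the packets currently held by OPT into those held by the algorithm that respects the designated buffer (a $Q_i$-packet maps to a $Q_i$-packet), and, simultaneously, that the algorithm has transmitted at least as many packets so far as OPT. The arrival step is handled buffer by buffer exactly as in Lemma~\ref{lemma:singleoff1}: the algorithm drops only from a genuinely full buffer, which OPT cannot accommodate either without a compensating drop. The transmission step is the delicate case: each of OPT and the algorithm sends one packet; if they serve the same buffer, or the buffer the algorithm serves is one OPT could serve without loss, we re-align the injection as before; otherwise, the algorithm prefers some $Q_i$ with $Z_i(\tilde t)+|Q_i(t)|\ge B_i$, so by the lemma OPT, if it declines $Q_i$ now, must absorb an overflow drop in $Q_i$ later, which lets us modify OPT to serve $Q_i$ at step $t$ without decreasing its throughput; and when no buffer triggers the test all buffers are slack and the choice is immaterial up to relabeling. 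Carrying the invariant to time $D$ gives that the algorithm delivers at least as many packets as OPT. For the running time, sorting the $n$ packets by release time and grouping them into the sets $P_i(t)$ costs $O(n\log n)$; the backward recursion visits each release time once, so all $Z_i(\cdot)$ values are computed in $O(n)$ total; each buffer is kept as a FIFO queue, so appends, drops and sends are $O(1)$; and the per-step decision only needs to know whether some buffer currently passes the test, which is maintained with a priority queue keyed on $B_i-Z_i(\tilde t)-|Q_i(t)|$ and updated $O(n)$ times at $O(\log n)$ apiece. The total is $O(n\log n)$, dominated by the initial sort.

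I expect the transmission-step exchange to be the main obstacle. In the single-buffer setting of Lemma~\ref{lemma:singleoff1} a transmission never involves a choice among buffers, whereas here serving one buffer blocks every other buffer for that step, so the argument must show that the greedy priority ``serve a buffer that would otherwise be forced to overflow'' is globally, not merely locally, optimal, and that ties among several such buffers (or an arbitrary choice when there are none) never cost anything. Making the $Z_i$ semantics airtight --- in particular the effect of the cap $\min\{B_i,\cdot\}$ and the base term $D-r^i_{\max}$ --- and then feeding it into the exchange is where the real work lies; the remaining steps are routine.
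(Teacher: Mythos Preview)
Your proposal is correct and shares the paper's key idea---an exchange argument driven by the tightness test $Z_i(\tilde t)+|Q_i(t)|\ge B_i$---but the two proofs are organized differently. The paper does not set up an injection invariant at all and does not isolate a lemma about the semantics of $Z_i$; instead it argues directly that whenever OPT and the algorithm first pick different buffers, OPT can be rewritten to pick the algorithm's buffer, via a three-case split on whether each of the two buffers is tight. Your route layers extra scaffolding on top of this: the per-buffer injection (borrowed from Lemma~\ref{lemma:singleoff1}) and the explicit $Z_i$-interpretation lemma. That scaffolding is sound, and it makes the role of $Z_i$ and the $D-r^i_{\max}$ base term more transparent, but it is not needed once you are willing to modify OPT anyway---at that point you are really running the paper's pure exchange argument, and the injection becomes redundant bookkeeping. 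One place where your version is actually tighter than the paper's is the running-time analysis: the paper simply asserts $O(m)$ work per transmission step and then claims $O(n\log n)$ overall, whereas your priority-queue maintenance of the quantity $B_i-Z_i(\tilde t)-|Q_i(t)|$ gives an honest $O(n\log n)$ bound without appealing to an implicit relation between $m$, the number of steps, and $n$.
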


\begin{proof}
We first show the correctness of Algorithm~\ref{alg:multiuniform} using the exchange argument. We call our algorithm TS (standing for `Tight Schedule'). Remember that all packets are with the same value and same deadline and TS accepts packets in a greedy manner for each buffer, thus, as long as OPT and TS schedule packets from the same buffer in each time step, they achieve the same throughput. Let $\cal O$ denote the set of packets sent by OPT. Let $t$ be the first time step in which OPT and TS deliver packets from different buffers. OPT sends a packet $q_1$ from a buffer $Q_1$ and TS sends a packet $p_2$ from a buffer $Q_2$. If $p_1 \notin \cal O$, it is fine for OPT sends $p_1$ in this time step such that $\cal O$ is updated with ${\cal O} \cup \{p_1\} \setminus \{q_1\}$. Here, we assume $p_1 \in \cal O$. Since we choose $Q_2$ to send a packet, one of the following cases must happen. At time $t$, we use $\hat t$ and $\tilde t$ to differentiate the two (possibly) distinct next release time of packets targeting buffers $Q_1$ and $Q_2$ respectively.

\begin{enumerate}
\item Assume $Z_1({\hat t}) + |Q_1(t)| < B_1$ and $Z_2({\tilde t}) + |Q_2(t)| < B_2$. In this case, delivering either $p_1$ or $q_1$ will not result packet overflow for both buffers $Q_1$ and $Q_2$. Thus, OPT can be changed to choose $Q_2$ to send a packet.

\item Assume $Z_1({\hat t}) + |Q_1(t)| < B_1$ and $Z_2({\tilde t}) + |Q_2(t)| \ge B_2$. In this case, if TS does not choose $Q_2$ to send a packet, one packet released at time ${\tilde t}$ or future will not be delivered successfully. Let this packet be $p$. Then, among all packets in $Q_2$'s current buffer and those packets released later targeting $Q_2$, one of them must not be in $\cal O$. Otherwise, OPT will choose $Q_2$ to send a packet to avoid $Q_2$'s packet overflow. Assume the packet sending sequence since time $t$ for OPT is $q_1, \ \ldots, \ p_1, \ \ldots$. We modify the sequence for OPT as $p_1, \ \ldots, \ p, \ \ldots$ and update $\cal O$ as ${\cal O} \cup \{p\} \setminus \{q_1\}$. Since $p_1$ is delivered in this time step, there exists an extra buffer slot (compared with that of the unmodified OPT which does not send $p_1$ for step $t$) to accommodate $p$ in the buffer $Q_2$ and thus, the new packet sequence is feasible. After our modification, OPT's total gain is not reduced and OPT chooses the same queue as TS does to send a packet in this time step.

\item Assume $Z_1({\hat t}) + |Q_1(t)| \ge B_1$ and $Z_2({\tilde t}) + |Q_2(t)| \ge B_2$. In this case, delivering either $p_1$ or $q_1$ will result packet overflow for the other buffer. Thus, with the same analysis as the above case, OPT can be changed to choose $Q_2$ to send a packet.
\end{enumerate}

We then show the running time of Algorithm~\ref{alg:multiuniform}. Sorting all distinct release time for each buffer takes $O(n \log n)$ time. Calculating the variables $Z_i(t)$ takes linear time $O(n)$. For each time $t$, we identify the buffer to send a packet and this takes time $O(m)$. In total, the running complexity of our algorithm is $O(n \log n)$. The proof is completed. $\Box$
\end{proof}

The proof of Theorem~\ref{theorem:uniformoff} immediately implies the following corollary.

\begin{corollary}
In scheduling packets with the same value and same deadline, Algorithm~\ref{alg:multiuniform} provides a way to identify whether a set of packets can be delivered successfully.
\end{corollary}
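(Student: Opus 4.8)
The plan is to derive the corollary directly from the optimality of \textsf{TS} established in the proof of Theorem~\ref{theorem:uniformoff}. The key observation is that, for packets sharing a common value and a common deadline, ``delivering the maximum number of packets'' and ``deciding whether every packet can be delivered'' are the same question: a set $S$ is fully schedulable if and only if the maximum throughput achievable on $S$ equals $|S|$.

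First I would fix an arbitrary set $S$ of packets (with the common value and common deadline $D$, respecting $|P_i(t)| \le B_i$ at every release time) and run Algorithm~\ref{alg:multiuniform} on $S$ alone. By Theorem~\ref{theorem:uniformoff}, the resulting schedule is optimal, i.e., it transmits as many packets of $S$ by time $D$ as any feasible schedule does. Consequently, if $S$ admits a schedule delivering all of it, then the optimum is $|S|$ and hence \textsf{TS} itself delivers all of $S$; conversely, if \textsf{TS} leaves some packet of $S$ untransmitted by $D$, then no schedule delivers all of $S$, so $S$ is not fully schedulable. This yields the claimed decision procedure: run \textsf{TS} on $S$ and report ``feasible'' exactly when every packet of $S$ is sent by $D$.

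The one point that needs a little care is the precise test one reads off from a run of \textsf{TS}. \textsf{TS} drops a packet only on buffer overflow, but a run without any overflow drop may still fail to transmit every accepted packet by the common deadline $D$ (there can simply be more accepted packets than remaining time steps); hence the test must check actual transmission by $D$, not merely the absence of overflow drops. With that caveat, correctness is immediate from the optimality argument, and there is no asymptotic cost: the simulation is exactly the run analyzed in Theorem~\ref{theorem:uniformoff}, so the feasibility test also runs in $O(n \log n)$ time. I do not anticipate a substantive obstacle here — essentially all the content is already contained in Theorem~\ref{theorem:uniformoff}; the remaining work is only to spell out the ``maximum throughput $\iff$ all packets deliverable'' equivalence and to state the test cleanly.
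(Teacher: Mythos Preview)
Your proposal is correct and mirrors the paper's own reasoning: the paper offers no separate proof but simply notes that the corollary is an immediate consequence of the proof of Theorem~\ref{theorem:uniformoff}, and your argument spells out precisely that immediacy (optimality of TS on $S$ means TS delivers all of $S$ iff $S$ is fully schedulable). Your added caveat---that one must check actual transmission by $D$ rather than mere absence of overflow drops---is a useful clarification but does not depart from the paper's approach.
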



\subsection{The non-uniform-value setting.}

We realize that when each buffer size is large enough, the multi-buffer model is same as the bounded-delay model since all arriving packets can be accommodated in the buffers. Hence, we have two trivial results on the non-uniform-value setting.

\begin{lemma}
For the multi-buffer model, if all buffers have their sizes larger than the maximum slack of a packet targeting at them, the multi-buffer model is same as the bounded-delay model. An optimal offline algorithm running in time $O(n \log n)$ exists, where $n$ is the number of packets released.
\end{lemma}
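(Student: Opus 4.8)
The plan is to show that the stated slack condition makes every buffer-capacity constraint vacuous, so that the instance collapses onto the ordinary single-buffer, effectively-unbounded \emph{bounded-delay} model, for which the optimal offline algorithm of~\cite{KLMPSS04} runs in $O(n\log n)$ time and can be invoked as a black box. Concretely, I would first prove the following combinatorial fact: if $S$ is any set of packets that is feasible in the bounded-delay sense (its members can be transmitted one per time step, each packet $p$ sent somewhere in $[r_p,d_p]$, \emph{ignoring which buffer it sits in}), then any one-per-step schedule realizing $S$ never stores more than $B_i$ packets in buffer $Q_i$ at once. The reason is a pigeonhole count: a packet $p$ with $b_p=Q_i$ that is pending at time $t$ (released, unexpired, not yet sent) has $r_p\le t$ and, by hypothesis, $d_p-r_p< B_i$, hence $d_p\le t+B_i-1$; so it must occupy one of the $B_i$ steps $t,t+1,\dots,t+B_i-1$. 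Since the realizing schedule uses each step at most once, at most $B_i$ packets of $S$ — over \emph{all} buffers, a fortiori those targeting $Q_i$ — can be pending at time $t$. Thus the realizing schedule is automatically legal in the multi-buffer switch, and since conversely every multi-buffer-feasible set is trivially bounded-delay-feasible, the families of achievable (hence of maximum-value) packet sets coincide in the two models, for every $m$.

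Given this equivalence, the algorithm is immediate: strip the buffer labels $b_p$ from the $n$ input packets, feed them to the $O(n\log n)$ optimal offline algorithm for the bounded-delay model~\cite{KLMPSS04}, obtaining a maximum-value feasible set $S^\ast$ together with a one-per-step transmission schedule; then restore the labels and execute exactly that schedule in the multi-buffer switch, each selected packet waiting in its designated $Q_i$ until its assigned step (EDF order within a buffer, say). By the fact just proved, no buffer ever overflows, the schedule is valid, and its value equals the bounded-delay optimum, which by the equivalence equals the multi-buffer optimum. The relabelling is $O(n)$, so the overall running time stays $O(n\log n)$.

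The main obstacle is the first step — making the pigeonhole argument airtight at the boundaries. I would have to fix a precise convention for when a just-released packet is "pending" (arrivals at time $t$ are available at step $t$), check the count still gives $B_i$ rather than $B_i+1$ under that convention, and handle the case of several packets released in the same step by appealing to the standing assumption $|P_i(t)|\le B_i$. I would also want to double-check the edge case of zero-slack packets and confirm that "$B_i$ larger than the maximum slack" (with integer release times and deadlines) indeed yields $d_p\le r_p+B_i-1$. Once these details are settled, the reduction to~\cite{KLMPSS04} is routine and the $O(n\log n)$ bound follows.
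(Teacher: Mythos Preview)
Your proposal is correct and follows the same line as the paper: once the buffer constraints are shown to be non-binding, one simply invokes the $O(n\log n)$ optimal offline algorithm for the bounded-delay model from~\cite{KLMPSS04}. In fact the paper states this lemma without proof, calling it a ``trivial result'' with only the one-line justification that ``all arriving packets can be accommodated in the buffers''; your pigeonhole argument---each pending packet in $Q_i$ at time $t$ must occupy one of the $B_i$ transmission slots $t,\dots,t+B_i-1$---makes this rigorous and supplies exactly the detail the paper omits.
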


\begin{corollary}
For the multi-buffer model, if there is no future arrivals, there exists an optimal offline algorithm sending the packets in the buffers, running in $O(n \log n)$ time, where $n$ is the number of packets in the current buffers.
\label{coro:multi}
\end{corollary}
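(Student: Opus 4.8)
The plan is to reduce this to the already-solved single-buffer / bounded-delay problem. First I would observe that when there are no future arrivals, the occupancy $|Q_i(t)|$ of every buffer is non-increasing in $t$: packets only leave a buffer, either by being transmitted or by being dropped, and nothing ever enters. Hence for \emph{any} schedule the capacity constraints $|Q_i(t)| \le B_i$ are satisfied automatically and may be ignored entirely. Consequently the designated-buffer labels $b_p$ become irrelevant to feasibility; the only coupling that survives among the $m$ buffers is the rule that at most one pending packet is sent per step, which is precisely the single-machine constraint of the bounded-delay model. I would also shift the time axis so that the current time is $0$: every remaining packet then has release time $0$ (it is already pending) and its original deadline, with any already-expired packet simply discarded.

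After this reduction the instance is exactly an instance of the bounded-delay model with buffer size $B \ge n$, i.e.\ the classical problem of maximizing the total value of unit-length packets sent by their deadlines on a single output link. I would therefore invoke the known optimal offline algorithm for that problem~\cite{KLMPSS04}, which runs in $O(n \log n)$ time; equivalently, this is precisely the large-buffer case covered by the lemma immediately preceding this corollary. For completeness one may also note, as in the proof of Lemma~\ref{lemma:singleoff2}, that the family of packet sets deliverable by their deadlines forms a matroid (heredity because any subset of a feasible set is feasible, and the exchange property because only one packet is sent per step), so the greedy algorithm that inserts packets in non-increasing value order, keeping a set feasible as certified by EDF, is optimal; implementing the feasibility test with a disjoint-set structure that locates the latest free time slot yields the $O(n \log n)$ bound, dominated by the initial sort.

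The only step that requires care is the first one: arguing rigorously that no capacity constraint can ever be violated once arrivals cease, and hence that the multi-buffer structure collapses to a single queue of $n$ value/deadline packets. Everything afterwards is a black-box appeal to~\cite{KLMPSS04} (or to the preceding lemma), so there is no substantial obstacle — the corollary is ``immediate'' precisely because of this collapse, and the running time $O(n \log n)$ is inherited directly.
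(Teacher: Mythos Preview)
Your proposal is correct and matches the paper's intended argument: the paper gives no explicit proof but presents this corollary as an immediate consequence of the preceding lemma, namely that once no further arrivals can occur the buffer-size constraints become vacuous and the instance collapses to the bounded-delay model solved in $O(n\log n)$ by~\cite{KLMPSS04}. Your observation that $|Q_i(t)|$ is non-increasing after the last arrival is exactly the missing one-line justification the paper leaves implicit.
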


In scheduling weighted packets sharing a common deadline, our idea is to combine Algorithm~\ref{alg:singlenon} and Algorithm~\ref{alg:multiuniform}. We note that this variant is a matroid as well (this can be verified easily as in the proof of Lemma~\ref{lemma:singleoff2}). Then a greedy algorithm  scheduling packets with more values is optimal. Let $S$ be a set of packets we decide to send. Initially, $S$ is empty. We order packets in decreasing order of values. Then, we examine packets one by one, as long as the new one and those already selected packets can be delivered by the common deadline, we add this new packet into $S$. Otherwise, we drop this newly considered packet. There is a questions unsolved: How do we identify whether a set of selected packets can be delivered as they belong to multiple buffers at different time? We apply the idea of Algorithm~\ref{alg:multiuniform}, specifically, the result of Corollary~\ref{coro:multi}.

\begin{algorithm}
Fix an input instance $\cal I$. We sort all packets in $\mathcal I$ in non-increasing value order. We start from a set of packets $S = \emptyset$. For each packet $j \in ({\mathcal I} \setminus S)$, we pick up $j$ in order and examine whether all packets in $S \cup \{j\}$ can be delivered successfully. (See below.) If `yes', we update $S$ with $S \cup \{j\}$. For each examined packet $j$, no matter whether we insert $j$ into $S$ or not, we drop it out of $\mathcal I$. We examine all packets in $\mathcal I$ in order till $\mathcal I$ gets empty.

Let $P'_i(t)$ denote a subset of selected packets ($S$) which are released at time $t$ targeting the buffer $Q_i$. For each buffer $Q_i$, consider $P_i(t)$ in decreasing order of release time $t$. In reverse order of release time, we calculate $Z_i(t) = \min\{B_i, \ Z_i(t') + |P_i(t)| - (t' - t)\}$, where $t'$ is the immediate next release time (of packets) after time $t$ for $Q_i$.

For each new arrival, if its designated buffer is full, drop the packet and return `no'. Otherwise, append the packet to the queue. In each time step $t$, send any packet from the buffer $Q_i$ if $Z_i({\tilde t}) + |Q_i(t)| \ge B_i$, where ${\tilde t}$ is the immediate next release time of packets for the buffer $Q_i$. Ties are broken arbitrarily. If all buffers $Q_i$ have $Z_i({\tilde t}) + |Q_i(t)| < B_i$, choose any packet to send. We switch to another buffer to send a packet only if this buffer is empty or if another buffer $Q_i$ satisfies $Z_i({\tilde t}) + |Q_i(t)| \ge B_i$ at time $t$.
\label{alg:multinon}
\end{algorithm}

\begin{theorem}
In scheduling packets with the same deadline, there exists an optimal offline algorithm running in $O(n^2 \log n)$ time, where $n$ is the number of packets released.
\label{theorem:nonuniformoff}
\end{theorem}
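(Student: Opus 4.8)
The plan is to recognize Algorithm~\ref{alg:multinon} as the matroid greedy algorithm for a weighted matroid whose independent sets are precisely the ``deliverable'' packet sets, and to verify that the embedded $Z_i(\cdot)$-based routine is a correct feasibility oracle for that matroid. Call a set $S$ of packets \emph{deliverable} if some schedule sends every packet of $S$ by the common deadline $D$ while never exceeding any buffer's capacity. First I would check the matroid structure, exactly along the lines sketched after Corollary~\ref{coro:multi} and in the proof of Lemma~\ref{lemma:singleoff2}. Heredity is immediate: discarding packets from a deliverable set only relaxes every timing and capacity constraint, so every subset of a deliverable set is deliverable. For the exchange property I would characterize deliverability of $S$ by a Hall-type condition — the single-server (EDF) requirement that $|\{p \in S : r_p \ge t\}| \le D - t + 1$ for every $t$, together with the per-buffer drain requirements that Algorithm~\ref{alg:multiuniform} encodes through the quantities $Z_i(\cdot)$ — and then view a deliverable set as an independent set of a transversal/scheduling matroid (packets on one side, time steps on the other, with each buffer drained at rate one and never holding more than $B_i$ waiting packets). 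The exchange property for $A, B$ deliverable with $|A| < |B|$ is then inherited from that matroid. I expect this to be the main obstacle: one must show that the interaction between the single shared server and the separate buffer-capacity constraints still yields a genuine matroid rather than merely an independence system.

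Granting the matroid structure, the correctness of Algorithm~\ref{alg:multinon} follows from the standard fact that scanning elements in non-increasing weight order and keeping each one whose addition preserves independence produces a maximum-weight independent set. Since here packet values play the role of weights, the final set $S$ attains the maximum total value over all deliverable sets, and scheduling $S$ by Algorithm~\ref{alg:multiuniform} delivers all of $S$, so the algorithm is optimal. It remains only to justify the test ``can all packets in $S \cup \{j\}$ be delivered successfully?'' I would observe that the second and third paragraphs of Algorithm~\ref{alg:multinon} are nothing but Algorithm~\ref{alg:multiuniform} executed on the instance $S \cup \{j\}$, with values ignored and common deadline $D$; the corollary following Theorem~\ref{theorem:uniformoff} — that Algorithm~\ref{alg:multiuniform} identifies whether a set of packets can be delivered — then gives exactly the verdict we need, with ``return no'' on a buffer overflow and ``return yes'' otherwise. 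I would also record the invariant that $S$ remains deliverable throughout the run, since each accepted $j$ passes the deliverability test for $S \cup \{j\}$, so independence of $S$ is maintained at every step.

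For the running time I would argue as follows. Sorting $\mathcal I$ by value costs $O(n \log n)$. The main loop iterates $n$ times; in each iteration the candidate set $S \cup \{j\}$ has at most $n$ packets, and running the feasibility oracle on it costs what Theorem~\ref{theorem:uniformoff} charges Algorithm~\ref{alg:multiuniform}, namely $O(n \log n)$ — sorting the distinct release times per buffer, $O(n)$ to compute all $Z_i(t)$, and $O(m)$ per relevant time step to select the serving buffer, with only $O(n)$ relevant steps. Hence the loop costs $n \cdot O(n \log n) = O(n^2 \log n)$, which dominates, and producing the final schedule of $S$ is an additional $O(n \log n)$. This yields the claimed bound of $O(n^2 \log n)$, completing the plan.
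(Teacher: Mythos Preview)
Your proposal is correct and follows essentially the same approach as the paper: both argue correctness via the matroid property of deliverable sets together with the feasibility oracle supplied by Algorithm~\ref{alg:multiuniform}, and both obtain $O(n^2\log n)$ as $n$ iterations times an $O(n\log n)$ feasibility check. You are in fact more careful than the paper about the one nontrivial point---the exchange property under the joint server/buffer constraints---which the paper simply asserts ``can be verified easily as in the proof of Lemma~\ref{lemma:singleoff2}''; your transversal-matroid framing is the right way to discharge that obligation.
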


\begin{proof}
The correctness of Algorithm~\ref{alg:multinon} depends on the matroid property of this variant and Corollary~\ref{coro:multi}.

We then show the running time of Algorithm~\ref{alg:multinon}. Sorting all distinct release time for each buffer takes $O(n \log n)$ time. Calculating the variables $Z_i(t)$ takes linear time $O(n)$. For each time $t$, we identify the buffer to send a packet and this takes time $O(m)$. In total, the running complexity of our algorithm in examining one packet is $O(n \log n)$. Thus, the total running time of Algorithm~\ref{alg:multinon} is $O(n^2 \log n)$. The proof is completed. $\Box$
\end{proof}


\section{Conclusion}

In this paper, we design offline algorithms for some variants of the multi-buffer model. We show that if the number of buffers is restricted to $1$ or if all packets share a common deadline, some efficient offline algorithms can be developed. However, for the general case of the multi-buffer model, the constraints from the buffer sizes, packets' deadlines and packets' values complicate this packet scheduling problem. An optimal offline algorithm for the general multi-buffer model is being under developed.


\bibliographystyle{plain}
\bibliography{../../buffer}


\end{document}